\newtheorem*{rep@theorem}{\rep@title}
\newcommand{\newreptheorem}[2]{%
\newenvironment{rep#1}[1]{%
 \def\rep@title{#2 \ref{##1}}%
 \begin{rep@theorem}}%
 {\end{rep@theorem}}}
\newcommand{\Oh}{\mathcal{O}}
\newcommand{\OhOp}[1]{\Oh\mathopen{}\mathclose\bgroup\left( #1 \aftergroup\egroup\right)}
\DeclareMathOperator{\poly}{{\rm poly}}
\newcommand{\FPT}{{\sf FPT}\xspace}
\newcommand{\NPh}{\hbox{{\sf NP}-hard}\xspace}
\DeclareMathOperator{\cl}{\mathrm{cl}}
\DeclareMathOperator{\height}{\mathrm{height}}
\def \transpose {{\intercal}}
\DeclareMathOperator{\td}{\mathrm{td}}
\DeclareMathOperator{\ttd}{\mathrm{th}}
\DeclareMathOperator{\fr}{\mathrm{frac}}
\DeclareMathOperator{\lcm}{\operatorname{lcm}}
\newcommand{\CC}{\mathcal{C}}
\newcommand{\III}{\mathcal{I}}
\newcommand{\probfont}[1]{\textsc{#1}}
\newcommand{\df}{:=}
\newcommand*\matbox[6][]{\node [label=center:#6, inner sep=-2pt, fit= (m-#2-#3) (m-#4-#5)] [draw=#1] {};}
\newcommand*\blockmat[3]{\small
\begin{tikzpicture}[baseline=(current bounding box.base)]
	\matrix (m) [ampersand replacement=\&, matrix of math nodes,left delimiter=(,right delimiter=),minimum width=2em,minimum height=2em,nodes in empty cells]
	{
	\&\&\&\&\\
	\&\&\&\&\\
	\&\&\&\&\\
	\&\&\&\&\\
	\&\&\&\&\\
	};
	\matbox{1}{1}{2}{2}{#1}
	\matbox{3}{1}{5}{1}{#2}
	\matbox{3}{3}{5}{5}{#3}
\end{tikzpicture}
}
\def\ve#1{\mathchoice{\mbox{\boldmath$\displaystyle\bf#1$}}
{\mbox{\boldmath$\textstyle\bf#1$}}
{\mbox{\boldmath$\scriptstyle\bf#1$}}
{\mbox{\boldmath$\scriptscriptstyle\bf#1$}}}
\newcommand\veb{{\ve b}}
\newcommand\vecc{{\ve c}}
\newcommand\vel{{\ve l}}
\newcommand\veu{{\ve u}}
\newcommand\vex{{\ve x}}
\newcommand\vey{{\ve y}}
\newcommand\vez{{\ve z}}
\newcommand\vezero{{\ve 0}}
\def\R{\mathbb{R}}
\def\Z{\mathbb{Z}}
\def\N{\mathbb{N}}
\def\Q{\mathbb{Q}}
\newcommand{\ILPF}{\probfont{ILP-feasibility}}
\newcommand{\MILPF}{\probfont{MILP-feasibility}}
\newcommand{\MILP}{\probfont{MILP}}
\begin{document}
\title{Parameterized Algorithms for MILPs with Small Treedepth}

\author{Cornelius Brand\inst{1} \and
Martin Koutecký\inst{1} \and
Sebastian Ordyniak\inst{2}}
\institute{Computer Science Institute, Charles University, Prague, Czech Republic \and Department of Computer Science, University of Sheffield, United Kingdom}

\maketitle

\begin{abstract}
Solving (mixed) integer linear programs, (M)ILPs for short, is a fundamental optimization task.
While hard in general, recent years have brought about vast progress for solving structurally restricted, (non-mixed) ILPs: $n$-fold, tree-fold, 2-stage stochastic and multi-stage stochastic programs admit efficient algorithms,
and all of these special cases are subsumed by the class of ILPs of small treedepth.

In this paper, we extend this line of work to the mixed case, by showing an
algorithm solving MILP in time $f(a,d) \poly(n)$, where $a$ is the largest coefficient of the constraint matrix, $d$ is its treedepth, and $n$ is the number of variables.

This is enabled by proving bounds on the denominators of the vertices of bounded-treedepth (non-integer) linear programs. We do so by carefully analyzing the inverses of invertible submatrices of the constraint matrix. This allows us to afford scaling up the mixed program to the integer grid, and applying the known methods for integer programs.

We trace the limiting boundary of our approach, showing that naturally related classes of linear programs have vertices of unbounded fractionality. 
Finally, we show that restricting the structure of only the integral variables in the constraint matrix does not yield tractable special cases.
\end{abstract}

\medskip
\noindent
\keywords{Mixed integer linear programming \and Tree-depth \and Fixed parameter algorithms \and $n$-fold ILP.}

\sloppy

\section{Introduction}
%\label{sec:introduction}
Integer linear programming is a fundamental hard problem, which motivates the search for tractable special cases.
In the '80s, Lenstra and Kannan~\cite{Kannan:1987,Lenstra:1983} and Papadimitriou~\cite{Papadimitriou81} have shown that the classes of ILPs with few variables or few constraints and small coefficients, respectively, are polynomially solvable.
A line of research going back almost 20 years~\cite{HOR,ChenM18,EisHK18,AH,HemmeckeKW:2014,GanianOrdyniakRamanujan17,GanianOrdyniak18,DvorakEGKO17} has recently culminated with the discovery of another tractable class of ILPs~\cite{EisenbrandHunkenschroederKleinKouteckyLevinOnn19,KouteckyLO18}, namely ILPs with small treedepth and coefficients.
The obtained results already found various algorithmic applications in areas such as
scheduling~\cite{KnopK18,CMYZ17,JansenKMR19},
stringology and social
choice~\cite{KnopKM17,KnopKM17b}, and the traveling salesman problem~\cite{ChenM18}.

The language of ``special tractable cases'' has been developed in the theory of parameterized complexity~\cite{CyganFKLMPPS:2015}.
We say that a problem is \emph{fixed-parameter tractable} (\FPT) parameterized by $k$ if it has an algorithm solving every instance $I$ in time $f(k) \poly(|I|)$ for some computable function $f$, and we call this an \emph{\FPT algorithm}.
Say that the height of a rooted forest is its largest root-leaf distance.
A graph $G=(V,E)$ has treepdepth $d$ if $d$ is the smallest height of a rooted forest $F=(V,E')$ in which each edge of $G$ is between an ancestor-descendant pair in $F$, and we write $\td(G)=d$.
The \emph{primal graph} $G_P(A)$ of a matrix $A \in \R^{m \times n}$ has a vertex for each column of $A$, and two vertices are connected if an index $k \in [m] = \{1, \dots, m\}$ exists such that both columns are non-zero in row $k$.
The \emph{dual graph} $G_D(A)$ is defined as $G_D(A) \df G_P(A^{\intercal})$.
Define the \emph{primal treedepth of $A$} to be $\td_P(A) = \td(G_P(A))$, and analogously $\td_D(A) = \td(G_D(A))$.
The recent results states that there is an algorithm solving ILP in time $f(\|A\|_\infty, \min\{\td_P(A), \td_D(A)\}) \poly(n)$, hence ILP is \FPT parameterized by $\|A\|_\infty$ and $\min\{\td_P(A), \td_D(A)\}$.
Besides this class, other parameterizations of ILP have been successfully employed to show tractability results, such as bounding the treewidth of the
primal graph and the largest variable domain~\cite{JansenKratsch15}, the treewidth of the incidence graph and the largest solution prefix sum~\cite{GanianOrdyniakRamanujan17}, or the signed clique-width of the incidence graph~\cite{EibenGKO18}.

It is therefore natural to ask whether these tractability results can
be generalized to more general settings than ILP.
In this paper we ask this question for Mixed ILP (MILP), where both integer and non-integer variables are allowed:
\begin{equation}\label{MILP}
\min \left\{ \vecc\vex \mid A\vex = \veb, \, \vel \leq \vex \leq
\veu \,, \vex \in \Z^{z}\times \Q^{q} \right\} \,, \tag{\MILP{}}
\end{equation}
with $A \in \Z^{m \times z+q}$, $\vel, \veu, \vecc \in \Z^{z+q}$ and $\veb \in \Z^{m}$.

MILP is a prominent modeling tool widely used in practice.
For example, Bixby~\cite{Bixby:2002} says in his famous analysis of LP solver speed-ups, \emph{``[I]nteger programming, and most particularly the mixed-integer variant, is the dominant application of linear programming in practice.''}
Already Lenstra has shown that MILP with few integer variables is polynomially solvable, naturally extending his result on ILPs with few variables.
Analogously, we seek to extend the recent tractability results from ILP to MILP,
most importantly for the parameterization by treedepth and largest coefficient.
Our main result is as follows:
\begin{theorem} \label{thm:main}
MILP is \FPT parameterized by $\|A\|_\infty$ and $\min\{\td_P(A), \td_D(A)\}$.
\end{theorem}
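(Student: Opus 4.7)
The plan is to reduce \MILP{} to pure ILP by a scaling argument. Fix an instance $(A,\veb,\vel,\veu,\vecc)$ with $\|A\|_\infty \leq a$ and $\min\{\td_P(A),\td_D(A)\} \leq d$, write $A = [A_I \mid A_C]$ according to the split into integer and continuous columns, and observe that whatever integer part $\vex_I^*$ is chosen at an optimum, the continuous part $\vex_C^*$ is itself an optimum of the rational LP
\[
\min\bigl\{\vecc_C \vex_C : A_C \vex_C = \veb - A_I \vex_I^*,\; \vel_C \leq \vex_C \leq \veu_C\bigr\},
\]
so it may be replaced by any vertex of that polyhedron without changing the cost. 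The matrix $A_C$ is a submatrix of $A$, so $\|A_C\|_\infty \leq a$ and, since treedepth is hereditary on induced subgraphs, $\min\{\td_P(A_C),\td_D(A_C)\} \leq d$.

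The technical heart of the proof is therefore a denominator bound for vertices of bounded-treedepth LPs: there is a computable function $\Delta(a,d)$ such that every vertex of $\{\vex \in \Q^n : M\vex = \veb',\; \vel' \leq \vex \leq \veu'\}$ with $\|M\|_\infty \leq a$ and $\min\{\td_P(M),\td_D(M)\} \leq d$ has all coordinates in $\tfrac{1}{\Delta}\Z$. Every vertex is the unique solution of a full-rank square subsystem $B\vey = \vef$ extracted from the tight constraints, and by Cramer's rule each coordinate is a quotient of determinants of submatrices of $B$; hence it suffices to bound $|\det B|$ by $\Delta(a,d)$. The standard Hadamard bound $(a\sqrt{m})^m$ depends on the size $m$ of $B$ and is useless for us. Instead one exploits that treedepth is closed under taking submatrices, so $B$ itself has treedepth $\leq d$. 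Unwinding the associated rooted-forest decomposition should yield, after suitable row and column permutations, a recursive block structure on $B$ in which both determinants and adjugates can be controlled by a function of $a$ and $d$ alone.

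With $\Delta(a,d)$ in hand, the reduction is mechanical. Introduce new integer variables $\vey_C \in \Z^q$ via $\vex_C \df \vey_C/\Delta$, multiply every equation by $\Delta$, and rescale $\vel_C,\veu_C,\vecc_C$ accordingly. The resulting constraint matrix is $A' = [\Delta A_I \mid A_C]$, which has the same primal and dual graphs as $A$ (column scaling does not affect the sparsity pattern), hence the same treedepth $\leq d$, and satisfies $\|A'\|_\infty \leq \Delta\cdot a$. The new program is a pure ILP in $(\vex_I,\vey_C)$, and feeding it into the known algorithm of Eisenbrand et al.\ and Koutecký et al.\ solves it in time $g(\Delta(a,d)\cdot a,\,d)\poly(n) = f(a,d)\poly(n)$, as required.

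The main obstacle is clearly the denominator bound. The combinatorial fact that submatrices of $A$ inherit the treedepth bound is immediate, but translating it into a quantitative determinant bound requires a careful inductive analysis along the treedepth forest: one must argue that, after appropriate permutations, $B$ admits a block decomposition whose leaf blocks have size and entries depending only on $d$ and $a$, and whose block structure composes up the forest in a way that keeps $|\det B|$ controlled. This is precisely the \emph{careful analysis of the inverses of invertible submatrices} flagged in the abstract, and it is where all the real work of the proof must go.
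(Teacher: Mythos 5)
Your overall architecture matches the paper's exactly: observe that the continuous part of an MILP optimum is a vertex of an LP with the same constraint matrix (restricted to continuous columns), bound the denominators of such vertices as a function of $a$ and $d$ alone, and then scale up the continuous variables to obtain a pure ILP with the same primal and dual graphs and coefficients blown up only by a function of $a$ and $d$. The observation that $A_C$ inherits the treedepth bound and the scaling construction $A' = [\Delta A_I \mid A_C]$ are both exactly right and correspond to the paper's integralized instance~\eqref{IMILP}.

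However, there is a genuine gap in the step you flag as the ``technical heart.'' You reduce the denominator bound, via Cramer's rule, to proving that $|\det B|$ is bounded by a function of $a$ and $d$. \emph{This is false}, and no amount of block analysis will repair it: the diagonal matrix $2I \in \Z^{n\times n}$ has $\|A\|_\infty=2$, primal and dual graphs with no edges (so treedepth $1$), and determinant $2^n$. More generally, a block-diagonal matrix with $n/t$ identical blocks of size $t$ has bounded treedepth but determinant exponential in $n$. The paper calls this out explicitly in the introduction (``As witnessed by any proper integer multiple of the identity, determinants can grow large even for matrices of very benign structure''). The quantity that actually admits a bound in $a$ and $d$ is not $|\det B|$ but the \emph{fractionality of $B^{-1}$}, i.e., the largest denominator of an entry of $B^{-1}$ in lowest terms; in the diagonal example the determinant is $2^n$ but every entry of $B^{-1}$ has denominator $2$. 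The cancellation between the numerator $\operatorname{adj}(B)$ and the denominator $\det(B)$ in Cramer's rule is exactly what must be exploited, and that requires working directly with the inverse. The paper does this by a double induction (on topological height of the $\td$-decomposition and on the number of ``tall'' blocks at the top level), using explicit shear transformations to peel off blocks, bounding $\fr(Q_1^{-1})$ and $\fr(Q_2^{-1})$ separately, and multiplying. Your proposal's last paragraph gestures at a block decomposition along the treedepth forest, which is indeed the right structural ingredient, but ``keeps $|\det B|$ controlled'' is the wrong invariant to chase; you must track $\fr(B^{-1})$ instead.
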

We note that our result also extends to the inequality form of MILP with constraints of the form $A\vex \leq \veb$ by the fact that introducing slack variables does not increase treedepth too much~\cite[Lemma 56]{EisenbrandHunkenschroederKleinKouteckyLevinOnn19}.

The proof goes by reducing an MILP instance to an ILP instance whose parameters do not increase too much, and then applying the existing algorithms for ILP.
A key technical result concerns the \emph{fractionality} of an MILP instance, which is the minimum of the maxima of the denominators in optimal solutions.
For example, it is well-known that the natural LP for the \textsc{Vertex Cover} problem has half-integral optima, that is, there exists an optimum with all values in $\{0, \frac{1}{2}, 1\}$.

The usual way to go about proving fractionality bounds is via Cramer's rule and a sufficiently good bound on the determinant. As witnessed by any proper integer multiple of the identity, determinants can grow large even for matrices of very benign structure. Instead, we need to analyze much more carefully the structure of the inverse of the appearing invertible submatrices, allowing us to show:
\begin{theorem} \label{thm:frac}
A MILP instance with a constraint matrix $A$ has an optimal solution $\vex$ whose largest denominator is bounded by $g(\|A\|_\infty, \min\{\td_P(A), \td_D(A)\})$ for some computable function $g$.
\end{theorem}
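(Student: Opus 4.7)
The plan is a two-step reduction. First, I would reduce the fractionality of an optimal MILP solution to the denominators appearing in the inverse of a square submatrix of $A$ of small treedepth and bounded entries. Second, I would prove a structural lemma about the denominators of such inverses.

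Assume the MILP admits an optimum $\vex^* = (\vex^*_z, \vex^*_q) \in \Z^z \times \Q^q$. Fixing the integer coordinates $\vex^*_z$, the rational part $\vex^*_q$ is optimal for a linear program in the continuous variables with feasible region $P = \{\vey \in \Q^q : A_q\vey = \veb - A_z\vex^*_z,\ \vel_q \leq \vey \leq \veu_q\}$, where $A_q$ and $A_z$ denote the column restrictions of $A$ to the continuous resp.\ integer variables. By standard LP theory, $\vex^*_q$ may be assumed to be a vertex of $P$. Such a vertex is obtained by setting a subset $F$ of the continuous variables to their (integer) bounds and uniquely solving the remaining square linear system $B\vey_{\bar F} = \ved$, where $B$ is a square submatrix of $A_q$ and $\ved$ is integer (being a combination of the integer data $\veb$, $A_z\vex^*_z$, and $(A_q)_F\vey^*_F$). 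Therefore every denominator of $\vex^*_q$ is a denominator of some entry of $B^{-1}$, and since $B$ is a submatrix of $A$ we have $\|B\|_\infty \leq \|A\|_\infty$ and $\min\{\td_P(B), \td_D(B)\} \leq \min\{\td_P(A), \td_D(A)\}$.

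The second step is to prove the key lemma: for any square invertible integer matrix $B$ with $\|B\|_\infty \leq a$ and $\min\{\td_P(B), \td_D(B)\} \leq d$, the entries of $B^{-1}$ have denominators bounded by a computable function $h(a,d)$. The naive Cramer-rule approach is too weak here, as witnessed by $B = aI_n$, whose determinant is $a^n$ while the denominators of $B^{-1}$ are only $a$; the bound must exploit the recursive block structure of $B$, not its determinant. I would proceed by induction on $d$. For $d = 1$ the dual (resp.\ primal) graph is edgeless, forcing $B$ to be a generalized permutation matrix with entries of modulus $\leq a$, so the bound holds with $h(a,1) = a$. For the inductive step, fix a treedepth-$d$ decomposition of the dual graph of $B$ and perform Gaussian elimination from the leaves of the forest upward; each leaf row has non-zero support only in columns whose row-support lies entirely on its $\leq d$ ancestor chain, so pivoting on an entry of a leaf row affects only its at most $d - 1$ ancestors and contributes a denominator factor of modulus $\leq a$.

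The main obstacle I anticipate is showing that denominators do not blow up across the many independent leaves feeding into shared ancestors during elimination: a naive accounting would give a useless $a^n$-type bound. The resolution must exploit the fact that, for each entry of $B^{-1}$, only the pivots along a single root-to-leaf path contribute to its denominator, because the treedepth structure forces column-to-column interactions to be routed through ancestor chains of length $\leq d$. Making this precise---perhaps via a representation of $B^{-1}$ as a structured sum over paths of the treedepth forest, or via a Laplace-expansion-style argument that respects the tree---should yield a bound of the form $h(a,d) \leq a^{f(d)}$ for some explicit function $f$. Combining this lemma with the reduction in the first step produces the desired function $g$ of Theorem~\ref{thm:frac}.
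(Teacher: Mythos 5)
Your first step — fixing the integral coordinates, passing to a vertex of the residual LP, and reducing the statement to a bound on $\fr(B^{-1})$ for a square invertible submatrix $B$ of $A$ of bounded treedepth and entries — matches the paper exactly, and is sound. Your key lemma is also the same as the paper's (restated) Theorem~\ref{thm:frac}. The genuine gap is in the inductive argument for that lemma.

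You propose to induce on the treedepth $d$ itself and to perform Gaussian elimination leaf-to-root. You then correctly isolate the crux: a naive accounting gives an $a^n$-style blow-up because many independent leaves feed into the same ancestors. Your proposed resolution — that for each entry of $B^{-1}$ only the pivots along a single root-to-leaf path contribute to its denominator — is left entirely unsubstantiated, and it is not in fact what saves the argument. In the paper's primal decomposition $A = \left(\begin{smallmatrix}Q_1 & 0 \\ \ast & Q_2\end{smallmatrix}\right)$ (with $Q_2$ block-diagonal over the ``sibling'' blocks $A_i$ with $n_i = m_i$, and $Q_1$ collecting the remaining $r' \leq k_1(F)$ blocks together with the ancestor columns), the independent leaves are handled by the observation $\fr(Q_2^{-1}) \leq \max_i \fr(A_i^{-1})$ — a \emph{maximum}, not a product — because $Q_2$ is block-diagonal. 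The ancestor block $Q_1$, however, \emph{is} treated by successive elimination of the $r'$ sub-blocks $\hat{A}_1, \hat{A}_2, \dots$, and the fractionality there genuinely compounds across these blocks, which correspond to \emph{different} root-to-leaf paths; the compounding is kept in check only because the number of such blocks is $r' \leq k_1(F) \leq \height(F)$, a quantity bounded by the parameter. So the single-path heuristic, taken literally, is false, and what actually bounds the blow-up is a nested induction: outer induction on the \emph{topological height} $\ttd(F)$ (not $\td$ directly), and inner induction on $r'$. Using $\ttd(F)$ rather than $\td$ matters because the recursion hands you $\td$-decompositions $\hat F_i$ of the pieces $\hat{A}_i = (\bar{A}_i~A_i)$ whose \emph{topological} height drops, while their ordinary height need not drop below $\height(F)$ in the way one might hope. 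Your base case ($G_P(B)$ edgeless, forcing a generalized permutation matrix) is correct for $d=1$, but the paper's base case is $\ttd(F)=1$ (so $B$ has $\leq \height(F)$ columns, and a Hadamard/Cramer bound suffices), which is the right base for the induction that actually closes.

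In short: the reduction and the lemma statement are right, but the heart of the proof — controlling the interaction between independent siblings and their shared ancestors — is only gestured at, via a conjecture that the paper neither uses nor supports. To repair your argument along your own lines, you would need to replace ``single-path'' by (i) a max over block-diagonal pieces and (ii) a bounded-length product over the ancestor-coupled pieces, together with an induction on topological height rather than raw treedepth.
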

We are not aware of any prior work which lifts a positive result for ILP to a result for MILP in this way.

\medskip

Let us relate Theorem~\ref{thm:main} to the well-studied classes of $2$-stage stochastic and $n$-fold ILP.
These ILPs have a constraint matrix composed of small blocks whose largest number of columns and rows, respectively for $2$-stage stochastic and $n$-fold ILP, is denoted $t$, and whose largest coefficient is $a=\|A\|_\infty$.
The bound given by Theorem~\ref{thm:frac} is then $a^{t^{\Oh(t^2)}}$ (cf. Remark~\ref{rem:nfoldbound}).
Consequently, the algorithmic results we obtain by using the currently state-of-the-art algorithms~\cite{EisenbrandHunkenschroederKleinKouteckyLevinOnn19} are near-linear \FPT algorithms parameterized by $a$ and $t$.
\begin{corollary} \label{cor:milp}
Let $g := {a^{a^{t^{\Oh(t^2)}}}}$.
$2$-stage stochastic MILP is solvable in time $2^g \cdot n \log^3 n \log \|\veu-\vel\|_\infty \log \|\vecc\|_\infty$, and
$n$-fold MILP is solvable in time $g \cdot n \log n \log \|\veu-\vel\|_\infty \log \|\vecc\|_\infty$.
\end{corollary}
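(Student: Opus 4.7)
The plan is to reduce each MILP of one of the two structured classes to an equivalent ILP whose parameters remain under control, and then invoke the state-of-the-art near-linear \FPT algorithms for $n$-fold and $2$-stage stochastic ILP from~\cite{EisenbrandHunkenschroederKleinKouteckyLevinOnn19}. First, observe that both classes have small treedepth in one of the two graphs: $n$-fold matrices satisfy $\td_D(A) = \Oh(t)$ (the few globally connecting rows form a shallow path, below which each diagonal block is an independent subtree of depth $\Oh(t)$), and dually $2$-stage stochastic matrices satisfy $\td_P(A) = \Oh(t)$. Thus Theorem~\ref{thm:frac} applies in both cases, and instantiating it quantitatively as in Remark~\ref{rem:nfoldbound} yields an optimal solution whose denominators are all bounded by $D \df a^{t^{\Oh(t^2)}}$.

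Next, I would perform a scaling reduction. Let $L \df \lcm(1, 2, \ldots, D)$, for which $\log L = \Oh(D)$ by the Prime Number Theorem. Split the constraint matrix columnwise as $A = [A_I \mid A_Q]$ by integer and rational variable blocks, and substitute $\vey \df L \vex_Q$; the equation $A_I \vex_I + A_Q \vex_Q = \veb$ becomes the ILP equation $(L A_I)\vex_I + A_Q \vey = L \veb$ with variable bounds $L \vel_Q \leq \vey \leq L \veu_Q$. Column scaling does not alter the non-zero pattern, so both the primal and the dual graph of $A' \df [L A_I \mid A_Q]$ coincide with those of $A$; in particular the block structure and all treedepth parameters are preserved. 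The new largest coefficient satisfies $\|A'\|_\infty \leq aL \leq a \cdot 2^{\Oh(D)} \leq a^{a^{t^{\Oh(t^2)}}} = g$, and the new box satisfies $\log\|\veu' - \vel'\|_\infty \leq \log L + \log\|\veu - \vel\|_\infty \leq g + \log\|\veu - \vel\|_\infty$.

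Feeding the resulting $n$-fold (resp.\ $2$-stage stochastic) ILP into the algorithms of~\cite{EisenbrandHunkenschroederKleinKouteckyLevinOnn19}, whose runtimes on an instance with coefficient bound $a'$ have the shape $(a')^{\poly(t)} \cdot n \log n \cdot \log\|\veu'\|_\infty \log\|\vecc\|_\infty$ and $2^{(a')^{\poly(t)}} \cdot n \log^3 n \cdot \log\|\veu'\|_\infty \log\|\vecc\|_\infty$ respectively, the substitution $a' = g$ absorbs the $\poly(t)$ factor in the exponent into the $t^{\Oh(t^2)}$ already present inside the double tower that defines $g$, yielding the claimed bounds; the extra additive $\log L \leq g$ inside $\log\|\veu'-\vel'\|_\infty$ is similarly dominated. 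Recovering $\vex_Q = \vey/L$ from the ILP optimum completes the reduction.

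The main obstacle is verifying that when Theorem~\ref{thm:frac} is specialised to these two block structures, the generic fractionality bound $g(a, t)$ indeed reduces to the quantitative expression $a^{t^{\Oh(t^2)}}$ promised in Remark~\ref{rem:nfoldbound}; this requires tracking the combinatorial estimates of the proof of Theorem~\ref{thm:frac} through the explicit block shapes. Everything else is routine parameter bookkeeping.
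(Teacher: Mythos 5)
Your proof is correct and takes essentially the same route as the paper: bound the fractionality by $a^{t^{\Oh(t^2)}}$ via Theorem~\ref{thm:frac}/Remark~\ref{rem:nfoldbound}, scale the continuous columns and right-hand side by a common denominator multiple (the paper uses $M!$, you use $\lcm(1,\dots,D)$ sharpened via the prime number theorem, a marginally tighter but equivalent choice), observe the block structure and treedepths are preserved, and invoke Corollaries~91 and~93 of~\cite{EisenbrandHunkenschroederKleinKouteckyLevinOnn19} with the inflated coefficient bound absorbed into $g$. The only minor detail both you and the paper gloss over is that the scaled box bounds contribute an extra additive $\log L$ inside $\log\|\veu'-\vel'\|_\infty$, but as you note this is absorbed into $g$ by adjusting the implicit constant in the $\Oh(t^2)$.
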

While solving the $2$-stage stochastic and $n$-fold LP is possible in polynomial time, the best known dependence of a general algorithm for LP on the dimension $n$ is superquadratic.
Eisenbrand et al.~\cite[Theorem 63]{EisenbrandHunkenschroederKleinKouteckyLevinOnn19} give parameterized algorithms whose dependence on $n$ is near-linear, but which depend on the required accuracy $\epsilon$ with a term of $\log(1/\epsilon)$.
General bounds on the the encoding length of vertices of polyhedra~\cite[Lemma 6.2.4]{GLS} imply that to obtain a vertex solution, we need to set $\epsilon$ to be at least as small as $\frac{1}{(an)^n}$, hence $\log(1/\epsilon) \geq (n \log an)$, making the resulting runtime superquadratic.
Since the bound of Theorem~\ref{thm:frac} does not depend on $n$, we in particular obtain first near-linear \FPT algorithms for LPs with small coefficients and small $\td_P(A)$ or $\td_D(A)$.
We spell out the resulting complexities for the aforementioned important classes:
\begin{corollary} \label{cor:lp}
	$2$-stage stochastic LP is solvable in time $2^{2a^{\Oh(t^3)}} \cdot n \log^3 n \log \|\veu-\vel\|_\infty \log \|\vecc\|_\infty$, and
	$n$-fold LP is solvable in time $a^{t^{\Oh(t^2)}} \cdot n \log n \log \|\veu-\vel\|_\infty \log \|\vecc\|_\infty$.
\end{corollary}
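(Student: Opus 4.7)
The plan is to mirror the proof of Corollary \ref{cor:milp}: invoke a fractionality bound, scale the LP to an equivalent ILP, and apply the state-of-the-art parameterized ILP algorithms. The crucial difference is that, with no integer variables, the fractionality bound is governed directly by Cramer's rule on basic feasible solutions, giving a much sharper estimate than in the full MILP case (in which the proof of Theorem \ref{thm:frac} must cope with the additional interaction between integer and continuous variables).

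First, I would observe that any feasible LP with a bounded optimum attains its optimum at a vertex $\vex^*$, characterized by a full-rank square submatrix $B$ of the extended constraint matrix obtained by augmenting $A$ with the rows of tight variable-bound constraints. Hence $\vex^*_J = B^{-1}\tilde\veb$ on the basic coordinates, and by Cramer's rule the denominators of $\vex^*$ divide $|\det(B)|$.

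Second, I would derive sharp determinant bounds for the two block structures of interest. For $n$-fold LPs, the basis matrix $B$ inherits the block-arrow structure of $A$; decomposing $\det(B)$ along this structure (eliminating the shared first block-row and applying Hadamard's inequality on each $t\times t$ factor) yields the denominator bound $D := a^{t^{\Oh(t^2)}}$. For 2-stage stochastic LPs, the symmetric analysis on the dual block structure gives $D := 2^{a^{\Oh(t^3)}}$. I would then scale the LP to an equivalent ILP by substituting $\vey := D\vex$: the matrix $A$ (and hence its block structure, treedepth, and $\|A\|_\infty$) is unchanged; the new variable bounds are $D\vel \leq \vey \leq D\veu$, the new right-hand side is $D\veb$, and minimizing $\vecc^{\intercal}\vex$ is equivalent to minimizing $\vecc^{\intercal}\vey$.

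Third, I would invoke the state-of-the-art $n$-fold and 2-stage stochastic ILP algorithms from \cite{EisenbrandHunkenschroederKleinKouteckyLevinOnn19}, whose dependence on variable bounds is logarithmic. Since $\log\|D(\veu-\vel)\|_\infty = \log D + \log\|\veu-\vel\|_\infty$, the scaling contributes only an additive $\log D = t^{\Oh(t^2)}\log a$ (respectively $a^{\Oh(t^3)}$) term, which is absorbed into the $f(a,t)$ factor in front. Reading off the resulting runtime yields the two stated bounds.

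The main obstacle is pinning down the sharper determinant bounds for the LP-only case without routing through the full MILP machinery of Theorem \ref{thm:frac}. This amounts to a careful accounting of how many nontrivial $t\times t$ blocks can appear after eliminating the shared rows/columns of the basis matrix $B$; the 2-stage stochastic bound in particular needs a transposed (dual) version of the same block analysis, which is why its dependence on $a$ appears in the exponent rather than merely inside a tower.
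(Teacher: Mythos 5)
Your route is genuinely different from the paper's. The paper does \emph{not} go via the scaling-to-ILP reduction here: it invokes \cite[Corollary~64]{EisenbrandHunkenschroederKleinKouteckyLevinOnn19}, which states that if a class of IPs is solvable in time $T$ then the corresponding LP is solvable to accuracy $\epsilon$ in time $T\cdot\log(1/\epsilon)$, and then uses Theorem~\ref{thm:frac} to conclude that $\epsilon$ need only be of size $1/a^{a^{t^{\Oh(t^2)}}}$. Your alternative --- mirror the~\eqref{IMILP} reduction, observe that $A$, its block structure, $\|A\|_\infty$, and the treedepth are all unchanged (only $\veb$ and the box bounds get scaled), and let the $\log\|\veu-\vel\|_\infty$ term in the ILP algorithm absorb the scaling --- is a legitimate and arguably more direct way to get the same bounds, and your observation that the constraint matrix itself does not change (unlike in the genuine MILP case, where the integer columns do get scaled) is the key reason the exponent stays $t^{\Oh(t^2)}$ rather than blowing up.

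However, there are two issues in your write-up. First, and this is the substantive error: your claimed fractionality bound for $2$-stage stochastic LPs, $D := 2^{a^{\Oh(t^3)}}$, is wrong, and the justification you give (``the symmetric analysis on the dual block structure gives \dots which is why its dependence on $a$ appears in the exponent rather than merely inside a tower'') is not correct. Transposing a matrix does not change determinants, and indeed both $n$-fold and $2$-stage stochastic matrices have topological height $2$, so by Remark~\ref{rem:nfoldbound} the fractionality of both is the \emph{same} single-exponential $a^{t^{\Oh(t^2)}}$. The doubly-exponential factor $2^{(2a)^{\Oh(t^3)}}$ in the stated runtime is the intrinsic complexity of the $2$-stage stochastic \emph{ILP solver} \cite[Corollary~93]{EisenbrandHunkenschroederKleinKouteckyLevinOnn19}, and has nothing to do with fractionality. (Your final runtime happens to come out right because the ILP solver's cost dominates, but the reasoning conflates two different quantities.)

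Second, a smaller gap: you substitute $\vey := D\vex$ for a single value $D$, but we do not know in advance which basis $B$ the optimal vertex corresponds to, hence we do not know its denominator. You need $D$ to be a common multiple of all candidate denominators (e.g., $D'!$ where $D'$ is the fractionality bound), or to iterate over candidate scaling factors $1,\dots,D'$ and take the best answer --- exactly as the paper does with $\tilde M = M!$ in Lemma~\ref{lem:IMILP} and the loop in the proof of Theorem~\ref{thm:main}. With either fix, $\log D$ grows to $a^{t^{\Oh(t^2)}}$ rather than $t^{\Oh(t^2)}\log a$, but this still gets absorbed into the claimed $a^{t^{\Oh(t^2)}}$ front factor, so the final bound survives.
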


\medskip

We also explore the limits of approaching the problem by bounding the fractionality of inverses:
Other ILP classes with parameterized algorithms involve constraint matrices with small primal treewidth~\cite{JansenKratsch15} and $4$-block $n$-fold matrices~\cite{HemmeckeKW:2014}.
Here, we obtain a negative answer:
\begin{lemma} \label{lem:highfracmilp}
For every $n \in \N$, there are MILP instances $I_1$ and $I_2$ with constraint matrices $A_1$ and $A_2$, such that $A_1$ has constant primal, dual, and incidence treewidth and $\|A_1\|_\infty = 2$, and $A_2$ is 4-block $n$-fold with all blocks being just $(1)$, and the fractionality is $2^{\Omega(n)}$ for $I_1$ and $\Omega(n)$ for $I_2$.
\end{lemma}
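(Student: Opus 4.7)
The plan is to exhibit two explicit families of MILP instances that realize the claimed fractionality lower bounds, and then verify their structural parameters.

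For $I_1$, I would use a ``halving chain.'' Take one integer variable $y$ with bounds $1 \leq y \leq 1$ and rational variables $x_1,\dots,x_n$ subject to $y - 2x_1 = 0$ and $x_{i-1} - 2x_i = 0$ for $i=2,\dots,n$, with objective $\min 0$. The system is triangular, so the unique feasible point has $y=1$ and $x_i = 2^{-i}$, yielding fractionality $2^n = 2^{\Omega(n)}$. Every row has exactly two nonzero entries in consecutive variable positions, so the primal, dual and incidence graphs are all paths, which have treewidth $1$; the only nonzero coefficients are $\pm 1$ and $2$, so $\|A_1\|_\infty = 2$.

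For $I_2$, I would take the 4-block $n$-fold matrix with $A=B=C=D=(1)$. This yields one global variable $x_0$ and local variables $x_1,\dots,x_n$ subject to one global equation $x_0 + \sum_{i=1}^n x_i = b_0$ and $n$ local equations $x_0 + x_i = b_i$. Setting $b_0 = 0$, $b_1 = 1$, and $b_j=0$ for $j\geq 2$, the local equations give $x_i = b_i - x_0$; substituting into the global equation yields $(n-1)x_0 = \sum_i b_i - b_0 = 1$, so $x_0 = 1/(n-1)$ and every $x_i$ has denominator $n-1 = \Omega(n)$. The integer part of the instance is taken to be empty, which is permitted by the formulation.

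The verification then splits into three routine steps: (i) check feasibility and uniqueness of the solutions, (ii) check the structural parameters (paths for $I_1$, the block layout for $I_2$), and (iii) read off the fractionality. The one conceptual point worth flagging, which dictates the choice to use $z=0$ in $I_2$, is that one cannot simply declare any $x_i$ integer without breaking feasibility: since $\veb$ is integral, any local equation $x_0 + x_i = b_i$ with $x_i \in \Z$ forces $x_0 \in \Z$, which then contradicts $(n-1)x_0 = 1$. Hence the bad behavior for $4$-block $n$-fold structures already manifests in the pure LP case, which is actually the strongest possible form of this lower bound.
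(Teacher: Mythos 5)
Your proposal is correct and takes essentially the same route as the paper: your halving chain for $I_1$ is (up to sign and a one-shift) the paper's upper bidiagonal matrix with $2$ on the diagonal and $-1$ above it, and your $I_2$ is exactly the paper's $4$-block $n$-fold matrix with all blocks equal to $(1)$. You are a bit more explicit than the paper in that you pin down a right-hand side (and an integer variable in $I_1$) so that the feasible region is a single fractional point, whereas the paper exhibits the fractional inverse and then appeals to the general fact that every vertex of a polyhedron is the unique optimum for some objective; your observation that $I_2$ must have $z=0$ to remain feasible is a worthwhile explicit remark that the paper leaves implicit.
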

Next, we consider extending the positive result of Theorem~\ref{thm:main} to separable convex functions, which is the regime considered in~\cite{EisenbrandHunkenschroederKleinKouteckyLevinOnn19}.
We show that merely bounding the fractionality will unfortunately not suffice:
\begin{lemma} \label{lem:highfracmip}
There are MIP instances with the following properties:
\begin{enumerate}
	\item $A=(1 \cdots 1)$, $b=1$, $f(\vex) = \sum_i (x_i)^2$, $\td_D(A)=1$, fractionality $n$,
	\item dimension $1$, no constraints, $f(x) = (x - \frac{1}{k})^2$, fractionality $k$,
	\item dimension $1$, no equality constraints, $0 \leq x \leq 1$, $f(x) = x^3 + 2x^2 - x$ univariate cubic convex, unbounded fractionality (minimum is $\frac{\sqrt{7}}{3} - \frac{2}{3}$).
\end{enumerate}
\end{lemma}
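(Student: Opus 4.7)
The plan is to verify each of the three items by exhibiting the unique (continuous) minimizer explicitly and reading off its denominator, since each instance is a small, essentially one-dimensional convex optimization problem. Because the examples are constructive, the work is to check feasibility, convexity, and the claimed structural parameters rather than to develop new machinery.

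For item 1, I first note that $A=(1\cdots 1)$ has a single row, so the dual graph $G_D(A)$ is a single vertex, giving $\td_D(A)=1$ as claimed. On the affine hyperplane $\{\vex : \sum_i x_i = 1\}$ the strictly convex function $f(\vex)=\sum_i x_i^2$ has a unique minimizer, and by the permutation symmetry of both the constraint and the objective this minimizer must be the symmetric point $x_i = 1/n$ for all $i$; one can also verify this directly by a one-line Lagrange multiplier calculation. Hence every coordinate has denominator $n$, as claimed. For item 2 there is nothing to prove beyond observing that $f(x)=(x-1/k)^2$ attains its global minimum on $\R$ uniquely at $x=1/k$, which has denominator $k$; the absence of constraints trivially places this in the MIP framework.

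For item 3, the argument is again by explicit differentiation. Convexity of $f(x)=x^3+2x^2-x$ on $[0,1]$ follows from $f''(x)=6x+4 \geq 4 > 0$ there, so $f$ is strictly convex and any interior critical point is the unique minimum. Setting $f'(x)=3x^2+4x-1=0$ and applying the quadratic formula yields the roots $(-2\pm\sqrt{7})/3$; exactly one of these, $(\sqrt{7}-2)/3 \approx 0.215$, lies in $[0,1]$, and since $f'(0)=-1<0<6=f'(1)$ this interior point is indeed the minimum on $[0,1]$. Because $\sqrt{7}$ is irrational, the minimizer is irrational, so no rational representation exists and the fractionality is unbounded; this shows that even in a single free variable with no equality constraints, a generic cubic separable convex objective destroys any hope of a fractionality bound.

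If there is a subtle step at all, it is verifying in item 3 that the interior critical point is in fact the global minimum on the closed interval (as opposed to an endpoint), which is immediate from strict convexity together with the sign of $f'$ at $0$ and $1$. The other two parts are direct symmetry/completing-the-square arguments. In each case the example has been chosen so that the verification reduces to a single-variable calculus computation, and the lemma follows by collecting the three instances.
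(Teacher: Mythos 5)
Your proof is correct and takes essentially the same approach as the paper: exhibit the unique minimizer of each instance explicitly and read off its denominator. The paper's proof is terser (it simply states the minimizers), while you spell out the symmetry/Lagrange argument for item 1 and the quadratic-formula computation and sign-check of $f'$ for item 3, but the underlying reasoning is identical.
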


Finally, we consider a different way to extend tractable ILP classes to MILP.
Divide the constraint matrix $A$ of an MILP instance in two parts corresponding to the integer and continuous variables as $A=(A_{\Z}~A_{\Q})$.
What structural restrictions have to be placed on $A_{\Z}$ and $A_{\Q}$ in order to obtain tractability of MILP?
We show a general hardness result in this direction:
\begin{lemma} \label{lem:hard}
  Let $\CC$ be a class of ILP instances for which the feasibility decision problem is \NPh.
  Then there exists a class of MILP instances $\CC'$ whose feasibility decision problem is \NPh and
  whose constraint matrix is $A=\left(\begin{array}{cc}\vezero & A_\Q \\ I &-I\end{array}\right)$,
  where $I$ is the identity matrix and $A_{\Q}$ is a
  constraint matrix of an instance from $\CC$.
\end{lemma}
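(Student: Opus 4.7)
The plan is to give a direct polynomial-time reduction from ILP feasibility on $\CC$ to MILP feasibility on a class $\CC'$ of the prescribed shape. Given an ILP instance in $\CC$ with constraint matrix $\hat A \in \Z^{m \times z}$, right-hand side $\hat\veb \in \Z^m$, bounds $\hat\vel, \hat\veu \in \Z^z$, and integer variables $\vey$, I would build an MILP on $2z$ variables: keep $\vey \in \Z^z$ as the integer block of columns and introduce a continuous copy $\vex \in \Q^z$ as the second column block. I set $A_\Q \df \hat A$, so that $A$ takes the form displayed in the lemma statement, define the right-hand side $(\hat\veb, \vezero)^\transpose$, and propagate the bounds as $\hat\vel \leq \vey, \vex \leq \hat\veu$.

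Correctness is immediate from reading off the two row-blocks of $A(\vey,\vex)^\transpose = (\hat\veb, \vezero)^\transpose$. The bottom block $I\vey - I\vex = \vezero$ forces $\vex = \vey$, and because $\vey$ is integer this pins $\vex$ down to integer values as well, even though it is nominally continuous. The top block then reads $A_\Q \vex = \hat A \vey = \hat \veb$, which together with the bounds is exactly the original ILP on $\vey$. Conversely, every feasible ILP solution $\vey$ extends to a feasible MILP solution via $\vex \df \vey$. Taking $\CC'$ to be the image of this reduction yields a class of MILP instances of the required form whose feasibility problem is \NPh.

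There is essentially no technical obstacle: the reduction is syntactic and correctness is the one-line observation that the identity coupling $\vey = \vex$ transports integrality from $\vey$ to $\vex$. The only detail requiring care is the column ordering convention, so that the $I$ block of $A$ is paired with the integer variables and the $A_\Q$ block with the continuous ones, matching the partition $\vex \in \Z^{z}\times \Q^{q}$ in~\eqref{MILP}. The moral of the lemma is precisely that restricting the structure of $A_\Q$ alone cannot buy tractability, since this identity-coupling gadget can always rebrand continuous variables as integer ones.
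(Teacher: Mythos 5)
Your reduction is exactly the one in the paper: place the original ILP variables in the continuous block so that $A_\Q$ is the original constraint matrix, introduce integer copies in the integer block, and use the identity coupling $I\vey - I\vex = \vezero$ (together with duplicated bounds) to force the continuous variables to take integer values. The argument and the resulting constraint matrix match the paper's proof, so the proposal is correct and takes essentially the same approach.
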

Note that the main reason for intractability is that we allow
arbitrary interactions between the integer and the non-integer
variables of the instance.
Thus, Lemma~\ref{lem:hard} implies that this interaction between integral and fractional variables has to be restricted in some way in order to obtain a tractable fragment of MILP.

\subsection*{Related Work}
We have already mentioned related work on structural parameterizations of ILP.
The closest work to ours was done by Hemmecke~\cite{Hemmecke:2000} in 2000 when he studied a mixed-integer test set related to the Graver basis, which is the engine behind all recent progress on ILPs of small treedepth.
It is unclear how to apply his approach, however, because it requires bounding the norm of elements of the mixed-integer test set, where the bound obtained by (a strenghtening of)~\cite[Lemma 6.2]{Hemmecke:2000},\cite[Lemma 2.7.2]{Hemmecke:diss},
%(attributed to an unpublished manuscript of Foroudi and Graver)
is polynomial in $n$, too much to obtain an \FPT algorithm.
Kotnyek~\cite{Kotnyek:2002} characterized $k$-integral matrices, i.e., matrices whose solutions have fractionality bounded by $k$, however it is unclear how his characterization could be used to show Theorem~\ref{thm:frac}, so we take a different route.
Lenstra~\cite{Lenstra:1983} showed how to solve MILPs with few integer variables using the fact that a projection of a polytope is again a polytope; applying this approach to our case would require us to show that if $P$ is a polytope described by inequalities with small treedepth, then a projection of $P$ also has an inequality description of small treedepth.
This is unclear.
Half-integrality of two-commodity flow~\cite{Hu:1963,Karzanov:1998} and \textsc{Vertex Cover}~\cite{NemhauserT:1974} has been known for half a century.
Ideas related to half-integrality have recently led to improved \FPT algorithms~\cite{IwataWY:2016,IwataYY:18,Guillemot:2011}, some of which have been experimentally evaluated~\cite{PilipczukZ:2018}.

\section{Preliminaries}
We consider zero a natural number, i.e., $0 \in \N$.
We write vectors in boldface (e.g., $\vex, \vey$) and their entries in normal font (e.g., the $i$-th entry of~$\vex$ is~$x_i$).
For positive integers $m \leq n$ we set $[m,n] \df \{m,\ldots, n\}$ and $[n] \df [1,n]$.

\subsection{Reducing MILP to ILP}
Assume that an MILP instance is given and that some optimum $\vex = (\vex_{\Z}, \vex_{\Q})$ exists whose set of denominators is $D$, and we know $M = \max D$.
Recall $\lcm(D)$ is the least common multiple of the elements of $D$, and $\lcm(D) \leq M! \eqqcolon \tilde{M}$.
Then $\lcm(D) \vex_{\Q}$ is an integral vector.
Our idea here is to restrict our search among all optima of~\eqref{MILP} to search among those optima with small fractionality, that is, with small denominators.
Consider the \emph{integralized MILP} instance:
\begin{equation} \label{IMILP}
\min \{\vecc \vez \colon (\tilde{M} \cdot A_{\Z}~A_{\Q}) \vez = \tilde{M} \cdot \veb, (\vel_{\Z}, \tilde{M} \vel_{\Q}) \leq (\vez_{\Z}, \vez_{\Q}) \leq (\veu_{\Z}, \tilde{M} \veu_{\Q}), \, \vez \in \Z^{z+q}\} \enspace .\tag{IMILP}
\end{equation}
We claim that the optimum of~\eqref{MILP} can be recovered from the optimum of~\eqref{IMILP}:
\begin{lemma} \label{lem:IMILP}
Let $M$ be the fractionality of~\eqref{MILP} and $(\vez_{\Z}~\vez_{\Q}) \in \Z^{z+q}$ be an optimum of~\eqref{IMILP}.
Then $\vex = (\vez_{\Z}~\frac{1}{\tilde{M}} \vez_{\Q})$ is an optimum of~\eqref{MILP}.
\end{lemma}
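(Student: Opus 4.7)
The plan is a two-way feasibility translation between (MILP) and (IMILP), followed by an optima-matching argument that exploits the fractionality bound $M$.

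First, I would verify that $\vex = (\vez_{\Z},\,\tfrac{1}{\tilde M}\vez_{\Q})$ is feasible for (MILP). Integrality of $\vex_{\Z} = \vez_{\Z}$ is immediate. Multiplying the equation $(\tilde M\cdot A_{\Z}~A_{\Q})\vez = \tilde M\cdot\veb$ of (IMILP) by $1/\tilde M$ gives $A_{\Z}\vez_{\Z} + \tfrac{1}{\tilde M}A_{\Q}\vez_{\Q} = \veb$, which is exactly $A\vex = \veb$. The bound constraints translate by the same rescaling: $\vel_{\Z} \le \vez_{\Z} \le \veu_{\Z}$ is unchanged, and dividing $\tilde M\vel_{\Q} \le \vez_{\Q} \le \tilde M\veu_{\Q}$ by $\tilde M$ yields $\vel_{\Q} \le \vex_{\Q} \le \veu_{\Q}$. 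So $\vex$ is feasible for (MILP).

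Next, to compare objectives, I would exhibit the inverse lifting. Since $M$ is the fractionality of (MILP), there exists an optimum $\vex^{*}=(\vex^{*}_{\Z},\vex^{*}_{\Q})$ whose denominators all divide $M!=\tilde M$. Define $\vez^{*} = (\vex^{*}_{\Z},\,\tilde M\vex^{*}_{\Q})$, which lies in $\Z^{z+q}$. Running the rescaling arithmetic in reverse shows that $\vez^{*}$ satisfies every constraint of (IMILP), so $\vez^{*}$ is IMILP-feasible. By optimality of $\vez$ for (IMILP), its objective is no worse than that of $\vez^{*}$, and the correspondence between (MILP) and (IMILP) objectives (continuous costs rescaled consistently by $1/\tilde M$ on both sides, so that $\vecc\vex = \vecc_{\Z}\vez_{\Z} + \tfrac{1}{\tilde M}\vecc_{\Q}\vez_{\Q}$ and similarly for $\vex^{*},\vez^{*}$) then yields $\vecc\vex \le \vecc\vex^{*}$. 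Combined with feasibility, this means $\vex$ attains the (MILP) optimum.

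The only point requiring care is the objective accounting: the continuous block is scaled by $\tilde M$ between the two programs, so the IMILP cost vector must be read with this convention (i.e.\ the displayed $\vecc\vez$ should be interpreted so that minimizing it is equivalent to minimizing $\vecc_{\Z}\vez_{\Z} + \tfrac{1}{\tilde M}\vecc_{\Q}\vez_{\Q}$). Beyond this bookkeeping the proof is a routine rescaling argument; the substantive content sits upstream in Theorem~\ref{thm:frac}, which guarantees that a bounded $M$ exists at all, and hence that $\tilde M$ is a function of $\|A\|_\infty$ and $\min\{\td_{P}(A),\td_{D}(A)\}$ only.
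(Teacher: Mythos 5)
Your proof is correct and follows essentially the same approach as the paper's: a two-way correspondence between solutions of~\eqref{MILP} whose continuous entries have denominators dividing $\tilde{M}$ and integral solutions of~\eqref{IMILP}, combined with the fact that the fractionality bound $M$ guarantees an~\eqref{MILP} optimum on the $\tilde{M}$-grid. The paper's proof is far terser (it simply asserts the bijection and concludes), while you spell out the rescaling arithmetic explicitly.

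You also caught something the paper's proof slides over. As displayed, the~\eqref{IMILP} objective is $\vecc\vez = \vecc_{\Z}\vez_{\Z} + \vecc_{\Q}\vez_{\Q}$, whereas under the correspondence the~\eqref{MILP} objective value is $\vecc\vex = \vecc_{\Z}\vez_{\Z} + \frac{1}{\tilde{M}}\vecc_{\Q}\vez_{\Q}$; these two functions of $\vez$ do not share minimizers in general (take $\vecc_{\Z}=\vecc_{\Q}=1$, $\tilde{M}=2$, and compare $\vez=(1,2)$ with $\vez=(2,1)$: equal $\vecc\vez$, different $\vecc\vex$). The~\eqref{IMILP} cost vector should in fact be $(\tilde{M}\vecc_{\Z},\,\vecc_{\Q})$, which is still integral, so the downstream ILP machinery is unaffected; with this correction (or your equivalent reinterpretation) your argument closes cleanly. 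Flagging this latent imprecision is a genuine improvement over the paper's presentation.
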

\begin{proof}
It is clear that there is a bijection between solutions $\vex$ of~\eqref{MILP} where $\vex_{\Q}$ has all entries with a denominator $\tilde{M}$ and solutions $\vez$ of~\eqref{IMILP}.
The optimality of $\vex$ then follows from $M$ being the fractionality of~\eqref{MILP} and $M!$ always being divisible by $\lcm(D)$.
\end{proof}

\subsection{The Graphs of $A$ and Treedepth}

\begin{comment}
\begin{definition}[Primal and dual graph]\label{primaldual-graph}
	Given a matrix $A \in \Z^{m \times n}$, its \emph{primal graph} $G_P(A) = (V,E)$ is defined as $V = [n]$ and $E = \left\{\{i,j\} \in \binom{[n]}{2} ~\middle|~ \exists k \in [m]: A_{k,i}, A_{k,j} \neq 0\right\}$.
	In other words, its vertices are the columns of $A$ and two vertices are connected if there is a row with non-zero entries at the corresponding columns.
	The \emph{dual graph of $A$} is defined as $G_D(A) \df G_P(A^\transpose)$, that is, the primal graph of the transpose of $A$.
\end{definition}
\end{comment}

We assume that $G_P(A)$ and $G_D(A)$ are connected, otherwise $A$ has (up to row and column permutations) a block diagonal structure with $d$ blocks and solving~\eqref{MILP} amounts to solving $d$ smaller~\eqref{MILP} instances independently.

\begin{definition}[Treedepth]
	\label{def:tree-depth}
	The {\em closure} $\cl(F)$ of a rooted tree $F$
	is the graph obtained from $F$ by making every vertex adjacent to all of its ancestors.
	%We consider both $F$ and $\cl(F)$ as undirected graphs.
	The \emph{height} of a tree $F$ denoted $\height(F)$ is the maximum number of vertices on any root-leaf path.
	A \emph{$\td$-decomposition of $G$} is a tree $F$ such that $G \subseteq \cl(F)$.
	The {\em treedepth} $\td(G)$ of a connected graph $G$ is the minimum
	height of its $\td$-decomposition.
	%A $\td$-decomposition $F$ of $G$ is \emph{optimal} if $\height(F) = \td(G)$.
\end{definition}
Computing $\td(G)$ is \NPh  but can be done in time $2^{\td(G)^2} \cdot |V(G)|$~\cite{ReidlRVS:2014}, hence \FPT parameterized by $\td(G)$.
To facilitate our proofs we use a parameter called topological height introduced by Eisenbrand et al.~\cite{EisenbrandHunkenschroederKleinKouteckyLevinOnn19}:
\begin{definition}[{Topological height}] \label{def:topheight}
	A vertex of a rooted tree $F$ is \emph{degenerate} if it has exactly one child, and \emph{non-degenerate} otherwise (i.e., if it is a leaf or has at least two children).
	%Note that if the root has only one child then it is degenerate.
	The \emph{topological height of $F$}, denoted $\ttd(F)$, is the maximum number of non-degenerate vertices on any root-leaf path in $F$.
	%Equivalently, $\ttd(F)$ is the height of $F$ after contracting each edge from a degenerate vertex to its unique child.
	Clearly, $\ttd(F) \leq \height(F)$.
	
	%We shall now define the level heights of $F$, which relate to lengths of paths between non-degenerate vertices.
	For a root-leaf path $P=(v_{b(0)}, \dots, v_{b(1)}, \dots, v_{b(2)}, \dots v_{b(e)})$ with $e$ non-degenerate vertices $v_{b(1)}, \dots, v_{b(e)}$ (potentially $v_{b(0)} = v_{b(1)}$), define $k_1(P) \df |\{v_{b(0)}, \dots, v_{b(1)}\}|$, $k_i(P) \df |\{v_{b(i-1)}, \dots, v_{b(i)}\}|-1$ for all $i \in [2,e]$, and $k_i(P) \df 0$ for all $i > e$.
	For each $i \in [\ttd(F)]$, define $k_i(F) \df \max_{P: \text{root-leaf path}} k_i(P)$.
	We call $k_1(F), \dots, k_{\ttd(F)}(F)$ the \emph{level heights of $F$}.
	%See Figure~\ref{fig:td}.
\end{definition}
%\begin{figure}[bt]
%	\centering
%	\begin{subfigure}[b]{0.45\textwidth}
%		\includegraphics[width=\textwidth]{td_nogrey} \hfill
%		\caption{Two optimal $\td$-decompositions $F$ and $F'$ of the cycle on six vertices (in dashed edges). Non-degenerate vertices are enlarged. The trees obtained by contracting edges outgoing from vertices with only one child are pictured below. Notice that even though both $F$ and $F'$ are optimal $\td$-decompositions, their topological height differs. Dashed lines depict ``levels'' of $F$ and $F'$, and we have $k_1(F) = k_2(F) = k_1(F') = 2$ and $k_2(F') = k_3(F') = 1$.} \label{fig:td}
%	\end{subfigure}
%	\hfill
%	\begin{subfigure}[b]{0.53\textwidth}
%		\includegraphics[width=\textwidth]{decomp_smaller_nogrey} \hfill
%		\caption{The situation of Lemma~\ref{lem:decomposition}: an optimal $\td$-decomposition $F$ of $G_P(A)$ pictured in the matrix $A$, the decomposition into smaller blocks $\bar{A}_1, \dots, \bar{A}_d, A_1, \dots, A_d$ derived from $F$ and their $\td$-decompositions $F_1, \dots, F_d$, and a $\td$-decomposition $\hat{F}_d$ of $G_P(\hat{A}_d)$ (Lemma~\ref{lem:fhati}).}
%		\label{fig:decomposition}
%	\end{subfigure}
%	\caption{Illustration of Definitions~\ref{def:tree-depth} and~\ref{def:topheight} (part~\ref{fig:td}) and Lemmas~\ref{lem:decomposition} and~\ref{lem:fhati} (part~\ref{fig:decomposition}).} \label{fig:figs}
%\end{figure}
We also need two lemmas from~\cite{EisenbrandHunkenschroederKleinKouteckyLevinOnn19}.
\begin{lemma}[{Primal Decomposition~\cite[Lemma 19]{EisenbrandHunkenschroederKleinKouteckyLevinOnn19}}] \label{lem:decomposition}
	Let $A \in \Z^{m \times n}$, $G_P(A)$, and a $\td$-decomposition $F$ of $G_P(A)$ be given, where $n,m \geq 1$.
	Then there exists an algorithm computing in time $\Oh(n)$ a decomposition of $A$
	\begin{align}
	A=
	\left(\begin{array}{ccccc}
	\bar{A}_1 & A_1     &  &\\
	\vdots &      &  \ddots & \\
	\bar{A}_d&      &   & A_d
	\end{array}\right), \tag{block-structure} \label{eq:block-structure}
	\end{align}
	and $\td$-decompositions $F_1, \dots, F_d$ of $G_P(A_1), \dots, G_P(A_d)$, respectively,
	where $d \in \N$, $\bar{A}_i \in \Z^{m_i \times k_1(F)}$, $A_i \in \Z^{m_i \times n^i}$, $\ttd(F_i) \leq \ttd(F)-1$, $\height(F_i) \leq \height(F)-k_1(F)$, for $i \in [d]$, $n_1, \dots, n_d, m_1, \dots, m_d \in \N$.
\end{lemma}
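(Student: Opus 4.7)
The plan is to extract the block structure by examining the top of $F$. Let $v_{b(0)}, \dots, v_{b(1)}$ be the unique path in $F$ from the root to the first non-degenerate vertex $v_{b(1)}$, so $C_0 \df \{v_{b(0)}, \dots, v_{b(1)}\}$ has size $k_1(F)$. If $v_{b(1)}$ is a leaf the decomposition is trivial; otherwise let $T_1, \dots, T_d$ be the subtrees of $F$ rooted at the $d \geq 2$ children of $v_{b(1)}$, with vertex sets $V_1, \dots, V_d$. The columns of $A$ are thereby partitioned into $C_0 \cup V_1 \cup \dots \cup V_d$, and it remains to match up the rows and verify that $F_i \df T_i$ has the required properties.

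The main conceptual step is the observation that, since $G_P(A) \subseteq \cl(F)$, the set of nonzero columns of any single row of $A$ induces a clique in $\cl(F)$ and hence lies on a single root-to-leaf path of $F$. Consequently, a row with a nonzero in some $V_i$ cannot have a nonzero in $V_j$ for $j \neq i$, because $V_i$ and $V_j$ belong to disjoint subtrees and are mutually incomparable in $F$. We may therefore partition the rows of $A$ into $R_1, \dots, R_d$, where $R_i$ collects all rows with a nonzero in $V_i$, assigning rows whose support lies entirely within $C_0$ arbitrarily (say to $R_1$). Permuting rows along $R_1, \dots, R_d$ and columns along $C_0, V_1, \dots, V_d$ yields the block form~\eqref{eq:block-structure}, with $\bar{A}_i$ the restriction of the rows in $R_i$ to $C_0$ and $A_i$ the restriction to $V_i$.

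It remains to check the properties of each $F_i$. An edge of $G_P(A_i)$ arises from some row in $R_i$ nonzero at two columns of $V_i$; these two columns are comparable in $F$ and both lie in $V_i$, hence they are comparable in $F_i = T_i$, giving $G_P(A_i) \subseteq \cl(F_i)$. The bound $\height(F_i) \leq \height(F) - k_1(F)$ is immediate since $F_i$ is obtained by discarding the top $k_1(F)$ ancestors of its root. For $\ttd(F_i) \leq \ttd(F) - 1$, every root-leaf path in $F_i$ extends to a root-leaf path in $F$ by prepending the chain $C_0$, and the non-degenerate vertex $v_{b(1)} \in C_0$ strictly increases the count of non-degenerate vertices along the extended path. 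Algorithmically, a single DFS of $F$ identifies $C_0$ and the subtrees, and then a scan of the nonzeros of $A$ realises the row partition and the block assembly; under the standard sparse encoding this runs in $\Oh(n)$ time. The only subtle bookkeeping is the free placement of rows with support entirely in $C_0$, which does not affect any of the invariants above; beyond this minor point, the argument is a direct unfolding of the tree-depth definition and poses no serious obstacle.
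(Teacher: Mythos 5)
Your proof is correct and follows essentially the same route as the cited proof of Eisenbrand et al.~\cite[Lemma 19]{EisenbrandHunkenschroederKleinKouteckyLevinOnn19}: isolate the chain $C_0$ from the root to the first non-degenerate vertex, observe that each row's support is a clique in $\cl(F)$ and hence a chain in $F$ so that no row can touch two distinct subtrees hanging off $v_{b(1)}$, partition rows accordingly, and let $F_i$ be the corresponding subtrees. The checks on $\height$, $\ttd$ (noting that non-degeneracy in $T_i$ and in $F$ coincide and that $v_{b(1)}$ contributes one extra non-degenerate vertex on the extended path), and the $\Oh(n)$ running time all match the standard argument, including the minor bookkeeping for rows supported entirely in $C_0$.
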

%Note that given an~\eqref{MILP}, the primal decomposition naturally partitions the right hand side $\veb = (\veb^1, \dots, \veb^d)$ according to the rows of $A_1, \dots, A_d$, and each object of length $n$ (such as bounds $\vel, \veu$, a solution $\vex$, any step $\veg$, or the objective function $f$) into $d+1$ objects according to the columns of $\bar{A}_1, A_1, \dots, A_d$.
%For example, we write $\vex = (\vex^0, \vex^1, \dots, \vex^d)$.
\begin{lemma}[{\cite[Lemma 21]{EisenbrandHunkenschroederKleinKouteckyLevinOnn19}}] \label{lem:fhati}
	Let $A \in \Z^{m \times n}$, a $\td$-decomposition $F$ of $G_P(A)$, and $\bar{A}_i, A_i, F_i$, for all $i \in [d]$, be as in Lemma~\ref{lem:decomposition}.
	Let $\hat{A}_i \df (\bar{A}_i~A_i)$ and let $\hat{F}_i$ be obtained from $F_i$ by appending a path on $k_1(F)$ new vertices to the root of $F_i$, and the other endpoint of the path is the new root.
	Then $\hat{F}_i$ is a $\td$-decomposition of $\hat{A}_i$, $\ttd(\hat{F}_i) < \ttd(F)$, and $\height(\hat{F}_i) \leq \height(F)$.
\end{lemma}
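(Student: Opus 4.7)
\medskip\noindent\textbf{Proof plan for Lemma~\ref{lem:fhati}.}
The plan is to verify each of the three claimed properties of $\hat{F}_i$ in turn, using the structure of $\hat{F}_i$ as $F_i$ with a pendant path of $k_1(F)$ new vertices glued above its root.

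First I would show that $\hat{F}_i$ is a $\td$-decomposition of $G_P(\hat{A}_i)$, i.e.\ that every edge of $G_P(\hat{A}_i)$ connects an ancestor/descendant pair in $\hat{F}_i$. Edges of $G_P(\hat{A}_i)$ fall into three types, depending on whether their endpoints lie in the columns of $\bar{A}_i$, of $A_i$, or one of each. For two columns of $A_i$, any such edge is already an edge of $G_P(A_i)$ and is handled by $F_i$, which sits as a subtree of $\hat{F}_i$ with the same ancestor relation. For two columns of $\bar{A}_i$, these correspond to the $k_1(F)$ new vertices of $\hat{F}_i$, which by construction form a root-to-leaf path, so any two are comparable in $\hat{F}_i$. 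For an edge from a column of $\bar{A}_i$ to a column of $A_i$, the former lies on the top path and the latter in $F_i$; but every vertex of $F_i$ is a descendant of every new vertex, so again the pair is comparable.

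Next I would control the height. Since $\hat{F}_i$ extends $F_i$ by a path of exactly $k_1(F)$ vertices glued above its root, any root-leaf path in $\hat{F}_i$ has length $k_1(F)+\height(F_i)$. Plugging in the bound $\height(F_i)\leq \height(F)-k_1(F)$ from Lemma~\ref{lem:decomposition} gives $\height(\hat{F}_i)\leq \height(F)$ directly.

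Finally, for topological height, the key observation is that every one of the $k_1(F)$ new vertices has exactly one child in $\hat{F}_i$ (either the next vertex on the appended path or the old root of $F_i$), and hence is degenerate; meanwhile, the degeneracy status of every vertex of $F_i$ is unchanged in $\hat{F}_i$, because its set of children is unaltered. Thus the non-degenerate vertices of $\hat{F}_i$ are exactly those of $F_i$, giving $\ttd(\hat{F}_i)=\ttd(F_i)\leq\ttd(F)-1$ by Lemma~\ref{lem:decomposition}, which is strictly less than $\ttd(F)$. I expect the main subtlety to be a careful accounting of the appended-path vertices versus the old root of $F_i$ in both the height and the topological-height bookkeeping; once those are pinned down, the three properties fall out essentially by construction.
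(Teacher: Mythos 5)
This lemma is stated in the paper only as an imported result, cited from Eisenbrand et al.\ (Lemma 21 of~\cite{EisenbrandHunkenschroederKleinKouteckyLevinOnn19}); the paper does not include a proof of its own, so there is no in-paper argument to compare against. Your reconstruction is nonetheless correct and complete: the three-way classification of edges of $G_P(\hat{A}_i)$ is the right way to verify the closure condition (an edge inside $A_i$ is witnessed already in $G_P(A_i)$ and handled by $F_i$; an edge inside $\bar{A}_i$ connects two of the $k_1(F)$ new vertices which lie on a single path, so are comparable; an edge crossing between $\bar{A}_i$ and $A_i$ connects a new vertex to a descendant, since the new path sits above all of $F_i$), the height bound follows by adding $k_1(F)$ to $\height(F_i) \leq \height(F)-k_1(F)$, and the topological-height bound follows because each appended vertex has exactly one child (hence is degenerate), no vertex of $F_i$ changes its set of children, and therefore $\ttd(\hat{F}_i)=\ttd(F_i)\leq\ttd(F)-1$. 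One cosmetic quibble: the $k_1(F)$ new vertices do not form a \emph{root-to-leaf} path (the bottom one has the old root of $F_i$ as a child); they merely lie on a common path, which is all that is needed for pairwise comparability.
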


\section{Fractionality of Bounded-Treedepth Matrices}
Consider any optimal solution $(\vex^\ast_\Z,\vex^\ast_\Q)$ of \eqref{MILP}.
The fractional part $\vex^\ast_\Q$ is necessarily an optimal solution of the \emph{linear} program $\min\{\vecc \vex_\Q:~A_\Q\vex_\Q = \veb - A_\Z \vex^\ast_\Z, \vel_\Q \leq \vex_\Q \leq \veu_\Q, \vex_\Q \in \Q^q\}$. To bound the fractionality of \eqref{MILP}, it therefore suffices to consider the fractionality of $A_\Q$, and we shall hence assume $A=A_\Q$.

Let us now recall some basic facts about vertices of polytopes adapted to the specifics of our situation.
Consider a vertex of the polytope described by the solutions of the system of
\begin{align} \label{eq:general lp}
A\vex = \veb, \vel \leq \vex\leq \veu\,,
\end{align}
with $A,\vex,\vel,\veu$ as usual.
Let $\vex$ be any solution of \eqref{eq:general lp}. 
Being a vertex means satisfying $n$ linearly independent constraints with equality.
Without loss of generality \cite[Proposition 4]{EisenbrandHunkenschroederKleinKouteckyLevinOnn19}, $A$ is \emph{pure}, meaning that its $m$ rows are linearly independent.

Since these first $m$ equations necessarily hold for any solution $\vex$, 
we have $m$ linearly independent constraints satisfied, and there remain $n-m$ of the in total $2n$ upper and lower bounds to be satisfied.
Without loss of generality, we may assume that it is indeed the first $n-m$ lower bound constraints that are met with equality, that is, 
$x_1 = l_1,\ldots,x_{n-m} = l_{n-m}$ holds. 
Let 
\[\vex_N = (x_1,\ldots,x_{n-m}) \in \Q^{n-m}, \vex_B = (x_{n-m+1},\ldots,x_{n}) \in \Q^{m}\,, \]
and partition accordingly the $n$ columns of $A$ as $A = (A_N ~ A_B)$.
Letting $\veb' = \veb - A_N\vex_N$,
the solution $\vex = (\vex_N, \vex_B)$ satisfies
\begin{align}
A_B\vex_B = \veb'.
\end{align}
Observe that $A_B \in \Z^{m\times m}$ is a square matrix with trivial kernel (that is, $A\vex = \ve{0}$ only for $\vex = \ve{0}$), thus invertible.
Therefore, $\vex_B = A_B^{-1} \veb'$.
(Otherwise, there is a direction $\vey$ in the kernel such that both $\vex + \epsilon \vey$ and $\vex - \epsilon \vey$ are feasible, hence $\vex$ was not a vertex.)
Hence, in order to bound the fractionality of the vertex $\vex$, 
it is enough to bound the fractionalities of the entries of $A_B^{-1}$.
We will denote with 
%$\enclen(M)_\infty$ the maximal encoding length of an entry of a matrix $M$, which for a rational number $p/q$ in lowest terms is $O(\log p + \log q)$.
$\fr(A)$ the \emph{fractionality} of $A$, meaning the maximum denominator appearing over all entries, represented as fractions in lowest terms, of $A$.
Note that $\fr(A_1 A_2 \dots A_k) \leq ({\fr(A_1)\cdot\fr(A_2)\cdots \fr(A_k)})!$ for any sequence of matrices $A_1,A_2,\ldots,A_k$,
where we use the aforementioned fact that the least common multiple of numbers bounded by $x$ is bounded by $x!$ .
If one of $A,B$ is $1\times 1$, then $\fr(AB) \leq \lcm(\fr(A),\fr(B))$ holds.

%\label{thm:small_inv}
\begin{reptheorem}{thm:frac}
	Let $A \in \Z^{m \times n}$, and $F$ be a $\td$-decomposition of $G_P(A)$.
	Let $A_B \in \Z^{m\times m}$ be a collection of columns of $A$ forming an invertible submatrix. 
	Then, $\fr(A_B^{-1})$ is bounded by a function of $\height(F)$ and $\|A\|_\infty$.
\end{reptheorem}
Before we proceed with the proof, we will recall the following elementary, but important facts from linear algebra.
A \emph{generalized shear matrix} is a block matrix of the form $M = \begin{pmatrix} I & 0 \\ R & I \end{pmatrix}$, with the blocks being of appropriate size.

\begin{lemma} \label{lem:shearprod}
Let $M_1 = \begin{pmatrix} I & 0 \\ R_1 & I \end{pmatrix}$ be a generalized shear matrix, and $M_2 = \begin{pmatrix} I & 0 \\ R_2 & A \end{pmatrix}$, such that the blocks of $M_1$ and $M_2$ are compatible.
Then, $M_1 \cdot M_2 = \begin{pmatrix} I & 0 \\ R_1 + R_2 & A \end{pmatrix}$.
\end{lemma}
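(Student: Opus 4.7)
The proof amounts to a direct block matrix multiplication, so my plan is to verify the claim blockwise after first fixing the sizes. I would begin by recording compatible partitions: let the top-left identity of $M_1$ be $k\times k$ and its bottom-right identity be $\ell\times \ell$, so that $R_1$ is $\ell\times k$. The compatibility hypothesis then forces $M_2$ to have a top-left identity of size $k\times k$, an $R_2$ of size $\ell\times k$, and an $A$ of size $\ell\times \ell$, so that every product of blocks appearing below is well-defined and every block sum is between conforming matrices.

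With the dimensions pinned down, I would compute the four blocks of $M_1 M_2$ using the standard $2\times 2$ block-product rule:
\[
M_1 M_2 \;=\; \begin{pmatrix} I\cdot I + 0\cdot R_2 & I\cdot 0 + 0\cdot A \\ R_1\cdot I + I\cdot R_2 & R_1\cdot 0 + I\cdot A \end{pmatrix} \;=\; \begin{pmatrix} I & 0 \\ R_1+R_2 & A \end{pmatrix},
\]
which is exactly the stated identity.

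There is no real obstacle here; the lemma is a bookkeeping fact expressing that left-multiplication by the generalized shear $M_1$ simply adds $R_1$ to the off-diagonal block of $M_2$, while leaving the other three blocks untouched. I expect this identity to be applied iteratively in the proof of Theorem~\ref{thm:frac} to peel off shear factors when inverting a block-triangular submatrix induced by the treedepth decomposition $F$, so that, together with Lemmas~\ref{lem:decomposition} and~\ref{lem:fhati}, the fractionality of $A_B^{-1}$ can be controlled one level of $F$ at a time.
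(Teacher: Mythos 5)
Your proof is correct and matches the paper's, which simply observes that the identity follows directly from block matrix multiplication; you have merely spelled out the blockwise computation and dimension bookkeeping explicitly. No further comment is needed.
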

\begin{proof}
Follows directly from the definition of matrix multiplication.
\end{proof}
This is relevant since multiplication with generalized shear matrices corresponds to sequences of elementary row or column transformations.
In particular, we refer to removing the lower left part of a matrix of the form of $M_2$ in the lemma through right or left multiplication with a matrix of the form of $M_1$ with $R_1 = -R_2$ as \emph{zeroing out} $R_2$ from $M_2$.

Let $\operatorname{adj}(A)$ be the matrix having as entries the cofactors of $A$ (commonly called the adjoint of $A$).
For reference, we give Cramer's rule: For invertible $A$,
\begin{align} \label{eq:cramer}
A^{-1} = \frac{1}{\det(A)}\cdot \operatorname{adj}(A)^T\,. \tag{Cramer's rule}
\end{align}
%We now proceed with:
\begin{proof}[Proof of Theorem \ref{thm:frac}]
We induce over $\ttd(F)$.
The base case of $\ttd(F) = 1$ means that $A$ has at most $\height(F) = k_1(F)$ columns, 
and hence by purity also at most $k_1(F)$ rows. 
By \ref{eq:cramer} and the Hadamard bound on determinants, the fractionality of the inverse of any invertible submatrix of $A$, and in particular of $A_B$,  
is therefore bounded by $(k_1(F)\|A\|_\infty)^{k_1(F)}$.

In the induction step, we assume $\ttd(F) > 1$.
Then, $A_B$ inherits a $\td$-decomposition of topological height at most that of $F$, since $G_P(A_B)$ is a subgraph of $G_P(A)$.
We shall therefore assume $A = A_B$ from here on.

Let $r = k_1(F)$.
Consider the block $A_j$ with dimensions $m_j \times n_j$.
Since $A$ is invertible, $m_j \geq n_j$ must hold.
Otherwise, we could combine an all-zeroes column in $A$ from the columns of $A_j$.
Since $A$ is square, $r + \sum_{j=1}^d n_j = \sum_{j=1}^d m_j$
holds.
Letting $r'$ be the number of different values of $j$ such that $m_j > n_j$ is strict,
we see that $r' \leq r$  must hold.
Without loss of generality, we may assume that the first $r'$ inequalities are strict, i.e., $n_1 < m_1,\ldots,n_{r'} < m_{r'}$ holds,
and $n_j = m_j$ for all $j > r'$.

Thus $A$ has the following form (where entries outside of the boxes are zero):
\[
A = \blockmat{$Q_1$}{$R$}{$Q_2$}.\,
\]
%where $Q_1$ has dimensions $left( \sum_{j=1}^{r'} m_j \right) \times \left(r+\sum_{j=1}^{r'} n_j\right),$
%which is square by Eq. \eqref{eq:square_consequence}.
%
%$Q_2$ is consequently of dimension
%$\left( \sum_{j=r'+1}^{r} m_j \right) \times \left( \sum_{j=r'+1}^{r} n_j \right),$
%which is also square.
%
%$R$ is of dimension 
%\[
%\left( \sum_{j=r'+1}^{r} m_j \right) \times r \,.
%\]
Both $Q_1$ and $Q_2$ have to be invertible:
Since row rank equals column rank equals rank, if $Q_1$ did not have full rank, then we could combine one of its rows out of the others. This combination would extend to a combination of rows of the entire matrix $A$. Consequently, $A$ would not be invertible, a contradiction.

Similarly, if $Q_2$ was not invertible, we would get a linear combination of some columns through others, and this would extend to the whole matrix by the same argument, just for columns (or for rows in the transpose).

Let $R' = -Q_2^{-1} \cdot (R ~ \vezero) \cdot Q_1^{-1}$.
Here, $(R~\vezero)$ is the matrix $R$ padded by zero-columns so as to be compatible with $Q_1^{-1}$. Equivalently, we could multiply only with those rows of $Q_1^{-1}$ that correspond to columns of $R$.
By Lemma \ref{lem:shearprod} and elementary matrix calculus,
the inverse of $A$ is given through
\begin{align*}
A^{-1} = \blockmat{$Q_1^{-1}$}{$R'$}{$Q_2^{-1}$}\,.
\end{align*}
Note that the bottleneck term for the fractionality here is $R'$, which contains a product, but we can already say (by the estimate on the fractionality of a product of matrices above) that $\fr(A^{-1})$ depends only on $\fr(Q_1^{-1})$ and $\fr(Q_2^{-1})$, since $R$ is integral.
Therefore, it is enough to bound $\fr(Q_i^{-1})$ individually.
The easier case is $Q_2$, which we take care of now.

By the shape of $A$, $Q_2$ is block diagonal, hence the inverse of $Q_2$ is the block diagonal matrix of the inverses of the blocks $A_i$, $i > r'$.
By Lemma~\ref{lem:decomposition}, each $G_P(A_i)$ has a $\td$-decomposition $F_i$ with $\ttd(F_i) < \ttd(F)$.
We may therefore apply the inductive hypothesis to them, and obtain that 
\begin{align} \label{eq:q2bound}
\fr(Q_2^{-1}) \leq \max_{i}({\fr(A_i)^{-1}}),
\end{align}
and this is bounded.\footnote{Whenever we say some quantity is \emph{bounded}, we mean bounded only by $\|A\|_\infty$ and $\height(F)$, for the remainder of the proof.}
It remains to argue about the inverse of $Q_1$.
Recall that we massaged $A$ such that $Q_1$ contains $r'$ blocks $\hat{A}_i$ for some $r' \leq r$, and the hatted matrices being defined as in Lemma \ref{lem:fhati}.
We now employ another induction, on the number of blocks $Q_1$ is composed of, which is $r'$. 
In particular, we show that the fractionality of $Q_1$ is bounded.

If $r'=0$, this means $A_1$ is empty, and $Q_1$ consists just of one block $\bar{A}_1$ of size $r\times r = k_1(F)\times k_1(F)$.
We can bound the fractionality of $Q_1^{-1}$ again by
$(k_1(F)\|A\|_\infty)^{k_1(F)}$, which is bounded.

Let $\hat{A}_i$ be defined as in Lemma~\ref{lem:fhati}.
Now, assuming $r' > 0$, Lemma~\ref{lem:fhati} crucially states that $G_P(\hat{A}_1)$ has a $\td$-decomposition $\hat{F}_1$ with $\ttd(\hat{F}_1) < \ttd(F)$.
Furthermore, $\hat{A}_1$ is of full rank by purity of $A$.
We may hence pick a set of columns $\hat{B}_1$ of $\hat{A}_1$ that form an invertible submatrix, and by inductive hypothesis \emph{on the topological height, not $r'$}, assume $\fr(\hat{B}_1^{-1})$ is bounded.
We refer to the columns of $\hat{A}_1$ belonging to $\hat{B}_1$ as \emph{invertible}.

Some of the invertible columns may be contained in the first $r$ columns of $\hat{A}_1$, which we call \emph{original}.
By permuting columns, we can bring all the columns of the invertible submatrix $\hat{B}_1$ of $\hat{A}_1$ to the left, having some columns of $\bar{A}_1$ and $A_1$ to its right, to which we now refer to as $N_1$.
That is, $Q_1$ is, up to permutation of columns, of the form 
\begin{align*}
Q_1 = \begin{pmatrix}
\hat{B}_1 & N_1 & \vezero\\
\ast & \ast & \ast
\end{pmatrix}.
\end{align*}
By performing elementary row operations on $\hat{A}_1$ within $Q_1$, we can convert this invertible submatrix $\hat{B}_1$ into an identity matrix, and thus ensure that $Q_1$ has an identity block in the upper left corner.

On the matrix level, this corresponds to left multiplication of $Q_1$ (and after appropriate padding with an identity matrix in the right-bottom corner, also $A$) with a matrix $E_1$ defined as follows:
\begin{align} \label{eq:matop1}
E_1 \cdot Q_1 = 
\begin{pmatrix}
\hat{B}_1^{-1} & \vezero \\
\vezero & I
\end{pmatrix} \cdot \begin{pmatrix}
\hat{B}_1 & N_1 & \vezero\\
\ast & \ast  & \ast
\end{pmatrix} = 
\begin{pmatrix}
I & \hat{B}_1^{-1} N_1 & \vezero\\
\ast & \ast & \ast
\end{pmatrix}.
\end{align}
Note that, below the non-original columns, there are zeroes.
Below the original columns, there are entries of $\bar{A}_i$, $i > 1$, 
which we denote with $\bar{O}$.
Therefore, the right-hand side in Eq. \eqref{eq:matop1} actually reads
$\begin{pmatrix}
I & \hat{B}_1^{-1} N_1 & 0\\
(\vezero ~ \bar{O}) & \ast & \ast
\end{pmatrix}.$
That is, the first asterisk above actually expands to $(\vezero ~ \bar{O})$.
Using Lemma \ref{lem:shearprod} (or rather, a very slight generalization thereof where the top left of $M_2$ is not necessarily an identity matrix), 
we now zero out these entries below the non-original columns, choosing $R_2 = -(\vezero~\bar{O}).$
This corresponds to left-multiplication with 
$E_2 = \begin{pmatrix}
I & \vezero \\
-(\vezero~\bar{O}) & I
\end{pmatrix}$,
yielding a new matrix
\[
\begin{pmatrix}
I & \hat{B}_1^{-1} N_1 & \vezero\\
\vezero & \ast_1 & \ast
\end{pmatrix}.
\]
This modifies the entries below the non-invertible columns of 
(the permuted version of) $\hat{A}_1$, marked $\ast_1$, 
by an additive term of $-(\vezero~\bar{O})\hat{B}_1^{-1}N_1$.

Employing Lemma \ref{lem:shearprod} again, we zero out the non-invertible columns in 
$\hat{A}_1$, that is, $\hat{B}_1^{-1}N_1$,
corresponding to a right multiplication with $E_3 = \begin{pmatrix}
I & \hat{B}_1^{-1}N_1 \\
\vezero & I
\end{pmatrix}$. 
We have thus massaged $Q_1$ into the form 
$\begin{pmatrix}
I & \vezero \\
\vezero & Q_1''
\end{pmatrix}$.
Finally, we need to ensure that the lower-right entry is integral.
To this end, let $\beta = \lcm_{ij}(\fr (Q_1'')_{ij})$, such that $Q_1' \coloneqq \beta \cdot Q_1''$ is integral.
Since $Q_1$ is integral, all fractionality in $Q_1''$ stems from the entries of $\hat{B}_1^{-1}$, which is of dimension $k_1(F)\times k_1(F)$. 
We can hence
bound $\beta \leq \fr (\hat{B}_1^{-1})^{k_1(F)^2}.$
Therefore, we have arrived at the matrix
$1/\beta \cdot \begin{pmatrix}
\beta\cdot I & \vezero \\
\vezero & Q_1'
\end{pmatrix},$
where $Q_1'$ is of the same structure as $Q_1$, that is, the bounds on the topological height still are satisfied, and $k_1(F)$ doesn't change. 
This makes it permissible to induce on $Q_1'$.
Moreover, $Q_1'$ contains one block less than $Q_1$, that is, $r'$ drops by one.
We then apply the inductive hypothesis (with respect to $r'$) to $Q_1'$.

By inductive hypothesis with respect to the topological height, $\fr(\hat{B}_1^{-1})$ is bounded.
By inductive hypothesis on $r'$, 
$\fr((Q_1'')^{-1})$ is bounded by a function only in $\height(F)$ and the size of the entries of $Q_1''$.
The entries of $Q_1'$ in turn are either the intact entries of $A$,
or they were modified by an additive term $-(\vezero~\bar{O})\hat{B}_1^{-1}N_1$.
These are, by inductive hypothesis on $\ttd(F)$, 
also bounded by $\|A\|_\infty$ and $\height(F)$. 

Now, the matrices effecting the transformation given above are $1/\beta,E_1,E_2,E_3$, where we understand the scalar $1/\beta$ as a $1\times 1$ matrix, in the sense that
$\beta\cdot (1/\beta)E_2 E_1 Q_1 E_3 = \begin{pmatrix}
I & \vezero \\
\vezero & Q_1''
\end{pmatrix}$.
Since the result is of block structure, $\begin{pmatrix}
I & \vezero \\
\vezero & (Q_1'')^{-1}
\end{pmatrix} \cdot E_2E_1Q_1E_3 = I$,
implying that $Q_1^{-1} = 1/\beta E_3 \begin{pmatrix}
\beta I & \vezero \\
\vezero & (Q_1')^{-1}
\end{pmatrix} E_2E_1$,
such that $\fr(Q_1^{-1})$ is bounded by a function in $\beta,\fr(E_1),\fr(E_2),\fr(E_3),\fr((Q_1')^{-1})$.
Inspecting the $E_i$ and applying the inductive hypothesis on topological height again, we see that each of these fractionalities depend only on $\|A\|_\infty$ and $\height(F)$. Repeating this step $r' \leq k_1(F)$ times is obviously also within the required bound, since $k_1(F)$ is bounded by $\height(F)$.
Hence, also $\fr Q_1^{-1}$ is bounded. 

Since the entire induction was on $\ttd(F)$, which is bounded by $\height(F)$, the claim holds.
\end{proof}
\begin{corollary}
Let $A, A_B$ be as in Theorem~\ref{thm:frac} and $F$ be a $\td$-decomposition of $G_D(A)$.
Then, $\fr(A_B^{-1})$ is bounded by a function of $\height(F)$ and $\|A\|_\infty$.
\end{corollary}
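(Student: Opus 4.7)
The plan is to reduce the dual statement to Theorem~\ref{thm:frac} by transposition. The three facts I will rely on are: (i) the primal graph of the transpose equals the dual graph, i.e.\ $G_P(M^\intercal)=G_D(M)$; (ii) for any invertible matrix $M$, $(M^\intercal)^{-1}=(M^{-1})^\intercal$; and (iii) fractionality is an entry-wise notion and hence invariant under transposition, so $\fr(M)=\fr(M^\intercal)$.

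First I would observe that the $\td$-decomposition hypothesis transfers to $A_B^\intercal$. Indeed, $G_D(A_B)$ and $G_D(A)$ have the same vertex set (one vertex per row of $A$, and $A_B$ retains all rows of $A$), and any pair of rows sharing a non-zero column in $A_B$ also shares one in $A$. Hence $G_D(A_B)$ is a spanning subgraph of $G_D(A)$, so $F$ is a $\td$-decomposition of $G_D(A_B)=G_P(A_B^\intercal)$ of the same height.

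Next I would apply Theorem~\ref{thm:frac} with the matrix $A_B^\intercal\in\Z^{m\times m}$ playing the role of ``$A$'' and $F$ playing the role of its $\td$-decomposition of the primal graph. Since $A_B^\intercal$ is already square and invertible (purity of $A_B^\intercal$ being automatic from invertibility of $A_B$), the invertible $m\times m$ column-submatrix required by the theorem can be chosen to be all of $A_B^\intercal$ itself. The theorem then yields a bound on $\fr((A_B^\intercal)^{-1})$ in terms of $\height(F)$ and $\|A_B^\intercal\|_\infty\leq\|A\|_\infty$. Combining with $\fr(A_B^{-1})=\fr((A_B^{-1})^\intercal)=\fr((A_B^\intercal)^{-1})$ finishes the argument.

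There is essentially no hard step: the only point worth checking is that the implicit purity assumption of Theorem~\ref{thm:frac} carries over to $A_B^\intercal$, which is immediate since its rows (the columns of $A_B$) are linearly independent. Everything else is bookkeeping about which graph is which.
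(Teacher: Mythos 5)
Your proof is correct and follows the same route as the paper's: pass to the transpose $A_B^\intercal$, note that $G_D(A_B)\subseteq G_D(A)$ so $F$ remains a valid $\td$-decomposition of $G_P(A_B^\intercal)$, apply Theorem~\ref{thm:frac}, and use $(A_B^\intercal)^{-1}=(A_B^{-1})^\intercal$ together with transpose-invariance of fractionality. You merely spell out a couple of points (purity of $A_B^\intercal$, the subgraph check) that the paper leaves implicit.
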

\begin{proof}
As argued before, $G_D(A_B)$ is a subgraph of $G_D(A)$.
By definition, $G_D(A_B) = G_P(A_B^{\intercal})$.
Finally, $(A_B^{-1})^{\intercal} = (A_B^{\intercal})^{-1}$, and applying the Theorem concludes the proof.
\end{proof}
\begin{remark} \label{rem:nfoldbound}
	The analysis simplifies significantly for the case of 2-stage stochastic and $n$-fold matrices, which are an important special case where $\ttd(F) = 2$: if $t$ is a bound on the block size, then we can bound the fractionality of the submatrix $A_B$ by $\|A\|_\infty^{t^{\Oh(t^2)}}$.
\end{remark}
\begin{proof}[Proof of Theorem~\ref{thm:main}]
Theorem~\ref{thm:frac} gives us a computable bound $M'$ on the largest coefficient of the~\eqref{IMILP} instance, and it is clear that the structure of non-zeroes (hence the primal and dual graphs) of the constraint matrix of~\eqref{IMILP} is identical to that of $A$.
Hence, by Lemma~\ref{lem:IMILP},~\eqref{MILP} can be solved by solving~\eqref{IMILP}, which can be done (by the results of~\cite{EisenbrandHunkenschroederKleinKouteckyLevinOnn19}) in \FPT time parameterized by $\|A\|_\infty$ and $\min\{\td_P(A), \td_D(A)\})$.
(To be precise, we need to solve~\eqref{IMILP} for every $1 \leq \tilde{M} \leq M'$, which is fine as long as $M'$ is computable, and this holds.)
\end{proof}
\begin{proof}[Proof of Corollary~\ref{cor:milp}]
By~\cite[Corollary 93]{EisenbrandHunkenschroederKleinKouteckyLevinOnn19}, $2$-stage stochastic ILP is solvable in time $2^{(2a)^{\Oh(t^3)}} n \log^3 n \log\|\veu-\vel\|_\infty \log \|\vecc\|_\infty$.
The coefficients of~\eqref{IMILP} are bounded by the lcm of numbers of size at most $p := a^{t^{\Oh(t^2)}}$, which is bounded by $a' := p^p$.
The claim follows by plugging in $a = a'$ in the aforementioned runtime.
The same holds for $n$-fold ILP considering that by~\cite[Corollary 91]{EisenbrandHunkenschroederKleinKouteckyLevinOnn19} it is solvable in time $(at^2)^{\Oh(t^3)} n \log n \log \|\veu-\vel\|_\infty \log \|\vecc\|_\infty$.
\end{proof}
\begin{proof}[Proof of Corollary~\ref{cor:lp}]
By~\cite[Corollary 64]{EisenbrandHunkenschroederKleinKouteckyLevinOnn19}, if there is an algorithm solving a class of IP in time $T$, then there is an algorithm solving the corresponding class of LP with accuracy\footnote{To solve an LP with accuracy $\epsilon$ means to find a solution which is at $\ell_\infty$-distance at most $\epsilon$ from an optimum.} $\epsilon$ in time $T \cdot \log(1/\epsilon)$.
By Theorem~\ref{thm:frac}, it is enough to set $1/\epsilon = a^{a^{t^{\Oh(t^2)}}}$, and plugging this into the aforementioned time complexity bounds concludes the proof.
\end{proof}

\section{Hardness Results}
\subsection{High Fractionality Instances}
\begin{proof}[Proof of Lemma~\ref{lem:highfracmilp}]
	It is easy to verify that $A_1^{-1}$ is the inverse of $A_1$ as stated below, both $n \times n$ matrices.
	\begin{align*}
		A = \begin{pmatrix}
			2&-1&0&\cdots&0 \\
			0&2&-1&\cdots&0 \\
			\vdots& &\ddots&&\vdots \\
			0&0&\cdots&&2
		\end{pmatrix}, &&
		A^{-1} = \begin{pmatrix}
			\frac{1}{2}&\frac{1}{2^2}&\frac{1}{2^3}&\cdots&\frac{1}{2^n} \\
			0&\frac{1}{2}&\frac{1}{2^2}&\cdots&\frac{1}{2^{n-1}} \\
			\vdots& &\ddots&&\vdots \\
			0&0&\cdots&&\frac{1}{2}
		\end{pmatrix} \enspace .
	\end{align*}
	Moreover, the primal, dual, and incidence treewidth of $A_1$ is $1$, and $\|A_1\|_\infty = 2$.
	
	It is again easy to verify that below are $A_2$ and its inverse, both $n \times n$, with $n' = n-2$, and $A_2$ is a $4$-block $n$-fold matrix with all blocks of size $1$:
	\begin{align*}
		A = \begin{pmatrix}
			1&1&1& \cdots&1 \\
			1&1&0& \cdots&0 \\
			1&0&1& \cdots&0 \\
			\vdots&&&\ddots & \\
			1&0&0& \cdots&1 \\
		\end{pmatrix}, &&
		A^{-1} = \begin{pmatrix}
			-\frac{1}{n'}&\frac{1}{n'}&\frac{1}{n'}& \cdots&\frac{1}{n'} \\
			\frac{1}{n'}&\frac{n'-1}{n}&-\frac{1}{n'}& \cdots&-\frac{1}{n'} \\
			\frac{1}{n'}&-\frac{1}{n'}&\frac{n'-1}{n}& \cdots&-\frac{1}{n'} \\
			\vdots&&&\ddots & \\
			\frac{1}{n'}&-\frac{1}{n'}&-\frac{1}{n'}& \cdots&\frac{n'-1}{n} \\
		\end{pmatrix} \enspace .
	\end{align*}
	Because for each vertex $\vex$ of a polyhedron there exists an objective vector $\vecc$ such that~\eqref{MILP} is uniquely optimal in $\vex$, and the fact that we have demonstrated inverses with high fractionality, there must exist vertices of high fractionality and corresponding objectives, which give the desired instances $I_1$ and $I_2$.
\end{proof}
\begin{remark}
	The $\Omega(n)$ fractionality lower bound in part 2 of Lemma~\ref{lem:highfracmilp} may be seen as mild given that for $4$-block $n$-fold we would seek an algorithm running in time $n^{f(k)}$, for $f$ some function and $k$ largest block size, and that (the more permissive) $n$-fold IP problem has such an algorithm even when its entries are polynomial in $n$.
	However, this is not true for the $2$-stage stochastic IP problem, which is \NPh with polynomially bounded coefficients already with constant-size blocks~\cite{DvorakEGKO17}.
	Because $4$-block $n$-fold IP is even harder than $2$-stage stochastic IP, the bounded fractionality approach cannot work for giving an $n^{f(k)}$ algorithm for $4$-block $n$-fold MILP.
\end{remark}

\begin{proof}[Proof of Lemma~\ref{lem:highfracmip}]
	All instances have unique optima, and it is straightforward to verify that in part 1 of the Lemma, it is the point $\vex = (\frac{1}{n}, \dots, \frac{1}{n})$, in part 2 it is $x=\frac{1}{k}$, for any $k$, and in part 3, the minimum is irrational $x=\frac{\sqrt{7}}{3} - \frac{2}{3}$, hence fractionality is unbounded.
	The objective $f(x) = x^3 + 2x^2 - x$ is not convex on $\R$, but it is between $0$ and $1$.
\end{proof}

\subsection{The Limits of Tractability for Structured MILPs}
Here, we show hardness Lemma~\ref{lem:hard} about the decision version of MILP, which is deciding the non-emptiness of the following set:
\begin{equation}\label{MILPF}
	\left\{ \vex \in \Z^{z}\times \Q^{q} \mid A\vex = \veb, \, \ve{l} \leq \vex \leq
	\ve{u} \right\} \enspace . \tag{\MILPF{}}
\end{equation}
%\begin{lemma}
%  \MILPF{} is \NPh even if the instance induced by all the integer
%  variables has no constraints and the subinstance induced by all
%  non-integer variables falls within any \NPh fragment of ILP. 
%\end{lemma}
\begin{proof}[Proof of Lemma~\ref{lem:hard}]
	We provide a polynomial-time reduction from ILP-feasibility. Let
	$\III:=\{ \vex \in \Z^n \mid A\vex = \veb, \, \vel \leq \vex\leq \veu\}$
	be an instance of \ILPF{}. Informally, we obtain the
	equivalent instance $\III'$ of MILP by putting the variables of $\III$
	into the non-integer part and then making an (integer) copy of every
	variable in $\III$, which ensures (by forcing the copy to be equal
	to its original) that the original variables can
	only take integer values. More formally, $\III'$ is given by:
	\begin{align}
		\left\{ x' \in \Z^n \times \Q^n \mid \left(\begin{array}{cc}
			& A \\
			I & -I \\
		\end{array}\right) \vex' =
		\left(\begin{array}{c}\veb\\\vezero\end{array}\right), \left(\begin{array}{c}\vel\\\vel\end{array}\right)\leq
		\vex' \leq \left(\begin{array}{c}\veu\\\veu\end{array}\right)\right\}
	\end{align}
	where $I$ is the $n\times n$ identity matrix and $\vezero$ is the
	$n$ dimensional all zero vector. Note that the subinstance induced
	by all integer variables of $\III'$ has no constraints and the
	subinstance induced by all non-integer variables is equal to $\III$.
	(Here, by an \emph{induced} subinstance we mean one obtained by retaining only constraints not containing any of the remaining variables, as those constraints would be arguably meaningless in the induced subinstance.)
\end{proof}
\begin{remark}
	It is an interesting question for future
	work whether we can generalize our results for MILP if we put
	additional restrictions on the interactions between integer and
	non-integer variables. A similar approach has recently been explored
	for generalizing the tractability result for ILP based on primal
	treedepth to MILP~\cite{GanianOrdyniakRamanujan17} using a hybrid
	decompositional parameter called torso-width.
\end{remark}

\bibliographystyle{splncs04}
\bibliography{milp}

\end{document}